\newtheorem{Thm}{Theorem}[section]
\newtheorem{propo}{Proposition}[section]
\newtheorem{remark}{Remark}[section]
\begin{document}
\title{Polyanalytic Reproducing Kernels on the Quantized Annulus}
\author[N. Demni]{Nizar Demni}
\address{IRMAR, Universit\'e de Rennes 1\\
Campus de Beaulieu\\
35042 Rennes cedex\\
France}
\author[Z. Mouayn]{Zouhair Mouayn}
\address{Department of Mathematics\\
Faculty of Sciences and Technics (M'Ghila)\\
Sultan Moulay Slimane \\
PO. Box 523, B\'eni Mellal\\
Morocco}
\maketitle

\begin{abstract}
While dealing with the constant-strength magnetic Laplacian on the annulus,
we complete J. Peetre's work. In particular, the eigenspaces associated with
its discrete spectrum are true-polyanalytic spaces with respect to the
invariant Cauchy-Riemann operator, and we write down explicit formulas for
their reproducing kernels. The latter are expressed by means of the fourth
Jacobi theta function and of its logarithmic derivatives when the magnetic
field strength is an integer. Under this quantization condition, we also
derive the transformation rule satisfied by the reproducing kernel under the
automorphism group of the annulus.
\end{abstract}

\section{Introduction}

\textbf{\ }A Riemann surface $\mathcal{M}$ is called hyperbolic if its
holomorphic universal covering space is the unit disk $\mathbb{D=}$ $\left\{
\zeta \in \mathbb{C},\left\vert \zeta \right\vert <1\right\} $. If
additionally its fundamental group $\pi _{1}\left( \mathcal{M}\right) $ is
commutative then $\mathcal{M}$ is isomorphic to either $\mathbb{D}$, or to the
punctured unit disk $\mathbb{D\setminus }\left\{ 0\right\} $ or to the
annulus 
\begin{equation}
\Omega _{1,R}:=\left\{ z\in \mathbb{C},1<\left\vert z\right\vert <R\right\} .
\tag{1.1}
\end{equation}%
Such surfaces are referred to as exceptional hyperbolic Riemann surfaces 
\cite{Far} and are involved in physics as phase spaces for the classical mechanics of systems with Hamiltonian
functions. A special interest is given to the annulus $\Omega _{1,R}$ and stems from the key role it plays in the quantization of the Hall effect.
Indeed, because of its complete universality, the quantization must be
insensitive to continuous deformations of the sample geometry. Using this
freedom degree and due to its additional symmetry, the annulus geometry was
proposed by Laughlin \cite{Laug} as a substitute of the standard "Hall bar"
one.

Recently, the annular domain appeared in the study of the confinement of exciton-polariton condensate in relation to the Meissner effect \cite{DDAI}.
In this respect, topological spin Meissner states can be observed at arbitrary high magnetic fields. For more details on this phenomenon, we refer
the reader to (\cite{Taka}, p.154) where a magnetic flux quantization is also discussed. The annulus appeared as well in relation to the problem of a Josephson junction in a superconducting loop subject to a uniform
magnetic field (see \cite{Badia} and references therein). In a nutshell, the occurrence of the annular geometry becomes has increased in both theoretical and experimental physics.

Geometrically, the annulus $\Omega _{1,R}$ can be covered by a horizontal
strip using the exponential map and the density of its corresponding Poincar\'{e} metric reads (see e.g. \cite{Pe,Pee}): 
\begin{equation}
\omega _{R}\left( z\right) :=\left( \frac{\log R}{\pi }\right) \left\vert
z\right\vert \sin \left( \frac{\pi \log \left\vert z\right\vert }{\log R}%
\right) .  \label{Poincare}
\end{equation}%
We can therefore consider for any $B>1/2$ the associated weighted $L^{2}$%
-space $\mathfrak{H}_{B}\left( \Omega _{1,R}\right) $ of functions $\phi
:\Omega _{1,R}\rightarrow \mathbb{C}$ with finite squared norm: 
\begin{equation}
\quad \int_{\Omega _{1,R}}|\phi (z)|^{2}\left( \omega _{R}\left( z\right)
\right) ^{2B-2}d\mu \left( z\right) <\infty ,  \tag{1.3}
\end{equation}%
$d\mu \left( z\right) $ being the Lebesgue measure on $\mathbb{C=R}^{2}.$
The holomorphic subspace $\mathcal{A}\left( \Omega _{1,R}\right) $ of $%
\mathfrak{H}_{B}\left( \Omega _{1,R}\right) $ was considered by Peetre \cite%
{Pee} where the correspondence principle \cite{Berezin} in the
semi-classical limit $B\rightarrow +\infty $ was proved with $B$ playing the role of the
inverse Planck constant $\hslash $. This principle was connected with
the holomorphic Berezin transform whose integral kernel is expressible in
terms of the reproducing kernel of $\mathcal{A}\left( \Omega _{1,R}\right) $%
. The latter was obtained in \cite{Pee-Zha} (see section 7 there) and may be
written in a more compact form using Euler's reflection formula for the
Gamma function \cite{AAR} as: 
\begin{equation}
K_{0}^{(R,B)}\left( z,w\right) :=\frac{(2\pi)^{2B-3}}{\Gamma \left(
2B-1\right) R^{B}(\log R)^{2B-1}}\sum\limits_{j\in \mathbb{Z}}\left\vert \Gamma
\left( B+i\frac{\log R}{\pi }\left( j+B\right) \right) \right\vert
^{2}\left( \frac{z\overline{w}}{R}\right) ^{j}.  \label{RKA}
\end{equation}%
The subspace $\mathcal{A}\left( \Omega _{1,R}\right) $ fits also the null space 
\begin{equation}
\mathcal{E}_{0}\left( \Omega _{1,R}\right) =\left\{ \phi \in \mathfrak{H}%
_{B}\left( \Omega _{1,R}\right) ,\text{ }\Delta _{B}\phi =0\text{ }\right\} 
\label{null}
\end{equation}%
of the following so-called invariant Laplacian operator with weight $B$ t(\cite{Pee}): 
\begin{equation}
\Delta _{B}:=-\left( \omega _{R}\left( z\right) \right) ^{2}\partial _{z%
\overline{z}}-2B\omega _{R}\left( z\right) \left( \partial _{z}\omega
_{R}\left( z\right) \right) \partial _{\overline{z}}.  \label{WL}
\end{equation}%
This is a densely-defined operator on $\mathfrak{H}_{B}\left( \Omega
_{1,R}\right) $ whose discrete spectrum consists of the following finite set
of eigenvalues: 
\begin{equation*}
\lambda _{B,m}:=-m\left( 2B-m-1\right) ,\text{ }m=0,1,...,\left\lfloor
B-\left( 1/2\right) \right\rfloor ,
\end{equation*}%
each being of infinite multiplicity ($\left\lfloor x\right\rfloor $ stands
for the greatest integer not exceeding $x$).

In this paper, we are concerned with higher Landau-Levels eigenspaces: 
\begin{equation}  \label{HLE}
\mathcal{E}_{m}\left( \Omega _{1,R}\right) =\left\{ \phi \in \mathfrak{H}%
_{B}\left( \Omega _{1,R}\right) ,\Delta _{B}\phi =\lambda _{B,m}\phi
\right\}, \quad m=0,1,...,\left\lfloor B-\left( 1/2\right) \right\rfloor.
\end{equation}%
For a fixed Landau level $\lambda _{B,m}$, the corresponding eigenspace
turns out to be the $m$th true polyanalytic space $\Omega_{1,R}$ with
respect to the invariant Cauchy-Riemann operator $\left( \omega _{R}\left(
z\right)\right) ^{2}\overline{\partial }$. Moreover, we shall extend
Peetre's formula \eqref{RKA} by establishing an explicit formula for the
reproducing kernel $K_{m}^{(R,B)}\left( z,w\right)$ of the Hilbert space $%
\mathcal{E}_{m}\left( \Omega _{1,R}\right)$. More precisely, the
non-orthonormal basis elements were expressed in \cite{Pee} through
Routh-Romanovski polynomials \cite{Rom,Rou} and we shall compute below their 
$L^2$-norm. These eigenfunctions were determined in \cite{Pee} after
carrying the eigenvalue problem into the one of the Schr\"{o}dinger operator
associated with the hyperbolic Scarf potential \cite{AK}. In this respect,
it is worth noting that the boundedness of this potential implies the
existence of a continuous spectrum for $\Delta _{B}$ corresponding to
scattering states.

Back to bound states, it is known that Routh-Romanovski polynomials may be
represented through Jacobi polynomials with imaginary arguments and
parameters. Making use of this relation, we shall prove that under the
quantization condition $B\in \mathbb{Z}_{+}$, the reproducing kernel $%
K_{m}^{(R,B)}\left( z,w\right) $ may be expressed through higher derivatives
of the fourth Jacobi's theta function $\theta _{4} $. In
particular, one retrieves the known fact that $K_{0}^{(R,B)}\left(
z,w\right) $ is closely connected to Weierstrass elliptic function $\wp $
associated with the rectangular lattice \cite{Ber}. Under the same
condition, we shall also derive the transformation rule of $%
K_{m}^{(R,B)}\left( z,w\right) $ under the action of the inversion with
respect to the circle centered at the origin and of radius $R$. Since this
kernel is readily seen to be invariant under rotations, this transformation
rule exhausts its quasi-invariance under the automorphism group of the
annulus. Of course, one can not expect a
strong analogy with the Poincar\'{e} disc case since the automorphism group
of the latter model is much more larger than the one of the annulus.

The paper is organized as follows. In section 2, we recall some geometrical
facts about the invariant Laplacian $\Delta _{B}$ as well as some of its
needed spectral properties. In section 3, we write down the orthonormal
basis of the eigenspaces \eqref{HLE} associated with the discrete spectrum
of $\Delta _{B}$, discuss their poly-analyticity property and derive
explicit expressions for the corresponding reproducing kernels. Section 4
is devoted to the relation of these kernels to the fourth Jacobi Theta
functions and to their invariance properties under the quantization
condition. Section 5 contains concluding remarks with a particular emphasis
on probabilistic aspects of reproducing kernels of poly-analytic Hilbert
spaces. Proofs of our results are detailed in three appendices.

\section{$L^{2}$ spectral theory of $\Delta _{B}$}

The Poincar\'{e} metric of the annulus $\Omega_{1,R}$ is written in local coordinates as: 
\begin{equation*}
ds=\frac{|dz|}{\omega _{R}\left( z\right) },
\end{equation*}%
where $\omega _{R}\left( z\right) $ is defined in $\left( 1.2\right) $. It allows to define the $\left( 1,0\right) -$ connection \textit{\ \ 
}%
\begin{equation*}
\varpi =\partial _{z}+B\partial _{z}\left( \log \omega _{R}\left( z\right)
\right) 
\end{equation*}%
to which is associated the Bochner Laplacian: 
\begin{equation*}
H_{B}=-(\omega _{R})^{2}(z)\left( \partial _{z}+B\partial _{z}\log (\omega
_{R}\left( z\right) )\right) \left( \partial _{\overline{z}}-B\partial _{%
\overline{z}}\log (\omega _{R}\left( z\right) )\right) 
\end{equation*}%
\begin{equation*}
=-(\omega _{R}\left( z\right) )^{2}\left( \partial _{\overline{z}}-B\partial
_{\overline{z}}\log (\omega _{R})\right) \left( \partial _{z}+B\partial
_{z}\log (\omega _{R}\left( z\right) )\right) +2B(\omega _{R}\left( z\right)
)^{2}\partial _{z\overline{z}}\log (w_{R}\left( z\right) ).
\end{equation*}%
This is a densely-defined operator on the weighted $L^{2}$-space 
\begin{equation*}
\mathfrak{H}_{0}\left( \Omega _{1,R}\right) =L^{2}\left( \Omega
_{1,R},\left( \omega _{R}\left( z\right) \right) ^{-2}d\mu \left( z\right)
\right) .
\end{equation*}%
Moreover, by analogy with (\cite{Shige}, p.124), the \textit{ground state
transformation} 
\begin{equation*}
Q_{B}:\mathfrak{H}_{0}\left( \Omega _{1,R}\right) \rightarrow \mathfrak{H}%
_{B}\left( \Omega _{1,R}\right) 
\end{equation*}%
defined by: 
\begin{equation*}
Q_{B}\left[ f\right] \left( z\right) =\left( \omega _{R}\left( z\right)
\right) ^{-B}f\left( z\right) ,\text{ \ \ \ \ }z\in \Omega _{1,R},
\end{equation*}%
is a unitary map and intertwines the operators $H_{B}$ and $\Delta _{B}$: 
\begin{equation*}
\left( Q_{B}\right) ^{-1}\circ H_{B}\circ Q_{B}=\Delta _{B}.
\end{equation*}%
On the other hand, since $\omega _{R}\left( z\right) $ is radial then $%
\Delta _{B}$ may be mapped to a Sturm-Liouville operator by letting it act
on functions of the form: 
\begin{equation*}
z\mapsto z^{j}f\left( \cot \zeta \right) ,\text{ }\zeta :=\frac{\pi }{\log R}%
\log \left\vert z\right\vert \in \left( 0,\pi \right) ,\text{ }j\in \mathbb{Z%
}\text{.}
\end{equation*}%
Doing so leads to the second-order differential operator in the variable $%
\xi =\cot \zeta ,$ given by (\cite{Pee}): 
\begin{equation*}
\mathcal{L}_{B}:=\left( 1+\xi ^{2}\right) \frac{d^{2}}{d\xi ^{2}}+2\left[
\left( 1-B\right) \xi -\left( j+B\right) \frac{\log R}{\pi }\right] \frac{d}{%
d\xi }
\end{equation*}%
whose eigenfunctions are given by Routh-Romanovski polynomials \cite{RWAK}: 
\begin{equation*}
\mathcal{R}_{m}^{\left(2\left( j+B\right) (\log R)/\pi,1-B\right)}\left( \xi \right) ,\text{ \ }m=1,2,...,\left\lfloor B-(1/2)\right\rfloor.
\end{equation*}%
These polynomials may be represented through Jacobi polynomials with
complex-conjugate imaginary parameters (\cite{Martinez}, p.2) : 
\begin{equation}
\mathcal{R}_{k}^{\left( a,b\right) }\left( x\right) =\left( -2i\right)
^{k}k!P_{k}^{\left( b-1+\frac{1}{2}ia,b-1-\frac{1}{2}ia\right) }\left(
ix\right) ,\text{ \ }k=0,1,...,\text{ \ }i^{2}=-1.  \label{Routh}
\end{equation}%
Set%
\begin{equation}\label{Alpha}
\alpha \left( j,B\right) := \frac{2}{\pi}\left( j+B\right) \log R 
\end{equation}%
then the Routh-Romanovski polynomials above are finitely orthogonal with respect to the Student-type weight: 
\begin{equation}
\varrho _{j}^{R,B}\left( \xi \right) :=\left( 1+\xi ^{2}\right) ^{B}\exp
\left( \alpha \left( j,B\right) \cot ^{-1}\left( \xi \right) \right) ,\text{%
{}}\xi \in \mathbb{R},  \label{Student}
\end{equation}%
whose moments exists up to the order $\left\lfloor 2B-1\right\rfloor $. In addition, $\mathcal{L}_{B}$ may be transformed to the Schr\"{o}dinger operator with the hyperbolic Scarf potential\footnote{The details of this transformation are written in \cite{Pee} but there are some misprints and the author missed a term proportional to $1/\cosh^2(\theta), \xi = \sinh(\theta)$.} , also referred to as Scarf II (\cite{AK}). 
Since this potential is bounded, then $\mathcal{L}_{B}$ admits a
continuous spectrum whose eigenfunctions (scattering states) may be found in \cite{KS}. 

\begin{remark}
The Routh-Romanovski polynomials were discovered by J. Routh \cite{Rou} and
rediscovered by V. I. Romanovski \cite{Rom} within the context of
probability distributions. They are also named Romanovski of type IV or
\textquotedblleft finite Romanovski\textquotedblright due to their finite-orthogonality.
\end{remark}

\begin{remark}
The spectral theory of $\Delta_B$ bears some similarities with that of the
Schr\"{o}dinger operator with uniform magnetic filed in the Poincar\'{e}
upper half-plane $\mathbb{H}^{2}=\left\{ \left( x,y\right) ,x\in \mathbb{R}%
,y>0\right\}$. The latter is actually given in suitable units by $%
(-M_{B}+B^{2})/2,$ where 
\begin{equation*}
M_{B} :=y^{2}\left( \partial _{x}^{2}+\partial_{y}^{2}\right) -2iBy\partial
_{x}
\end{equation*}
is the $B$-weight Maass Laplacian \cite{Mou1, Mou2}. Moreover, up to the
variable change $s=-\log y,$ and when acting on functions of the form $%
g\left( s,x\right) :=\exp \left(-i\gamma x-\frac{1}{2}s\right) \psi \left(
s\right) ,\gamma \in \mathbb{R}$, this operator is mapped to the
one-dimensional Schr\"{o}dinger operator with Morse potential (\cite{Morse}%
): 
\begin{equation*}
-\frac{1}{2}\partial _{s}^{2}g+\left( \frac{1}{2}\gamma ^{2}\exp \left(
-2s\right) +\gamma B\exp \left( -s\right) \right)g+\left( \frac{1}{2}B^{2}+%
\frac{1}{8}\right) g.
\end{equation*}%
For more details on this connection, we refer the reader to \cite{Linet,
Mou3}. The structure of the spectrum is also similar: provided that $B>1/2$,
a finite number of eigenvalues usually known as hyperbolic Landau levels
arises. Physically, this phenomenon means that the magnetic field has to be
strong enough to capture the particle in a closed orbit, giving rise to 
\textit{bound states} in which the particle cannot leave the system without
additional energy (\cite{Com}).
\end{remark}

\section{Reproducing kernel of $\mathcal{E}_{m}(\Omega _{1,R})$}

Let $B>1/2$ and fix $m=0,1,...,\left\lfloor B-1/2\right\rfloor $. Then an
orthogonal basis of the Hilbert space $\mathcal{E}_{m}(\Omega _{1,R})$ is
given by (\cite{Pee}): 
\begin{equation}\label{Basis}
\phi _{j}\left( z\right) =z^{j}\mathcal{R}_{m}^{\left( \alpha \left(
j,B\right) ,1-B\right) }\left( \cot \left( \frac{\pi }{\log R}\log
\left\vert z\right\vert \right) \right) ,\,z\in \Omega _{1,R},\text{ }j\in 
\mathbb{Z}.
\end{equation}%
The squared norm of $\phi _{j}$\ in $\mathfrak{H}_{B}\left( \Omega
_{1,R}\right) $ admits the following expression (see Appendix A for the
proof): 
\begin{equation}
\left\Vert \phi _{j}\right\Vert ^{2}=2^{3-2(B-m)}R^{B}\frac{(\log R)^{2B-1}}{\pi^{2B-3}}\frac{%
m!\Gamma (2B-m)}{(2(B-m)-1)}\frac{R^{j}}{\left\vert \Gamma \left( B-m+\frac{1%
}{2}i\alpha \left( j,B\right) \right) \right\vert ^{2}}.  \label{SQN}
\end{equation}%
Furthermore, these basis elements satisfy the poly-analyticity property with
respect to the invariant Cauchy-Riemann operator (see \cite{Pee-Zha1},
pp.241-243): 
\begin{equation*}
D_{\overline{z}}^{\omega }:=\left( \omega \left( z\right) \right)
^{2}\partial _{\overline{z}}.
\end{equation*}%
In particular, for $m=0,$ the null space $\mathcal{E}_{0}(\Omega _{1,R})$
coincides with the space $\mathcal{A}(\Omega _{1,R})$ of analytic
functions in $\Omega _{1,R}$ belonging to $\mathfrak{H}_{B}\left( \Omega
_{1,R}\right) $. However, one readily checks by direct computations that
unless $m=0$, we have 
\begin{equation*}
\left( \partial _{\overline{z}}\right) ^{m+1}\mathcal{R}_{m}^{(\alpha
,1-B)}\left( \cot \left( \frac{\pi \ln (|z|)}{\ln (R)}\right) \right) \neq 0.
\end{equation*}%
More generally, denote 
\begin{equation*}
\mathcal{F}^{\left( m\right) }(\Omega _{1,R}):=\left\{ \phi \in \mathfrak{H}%
_{B}\left( \Omega _{1,R}\right) ,\left( D_{\overline{z}}^{\omega }\right)
^{m+1}\left[ \phi \right] =0\right\} 
\end{equation*}%
the polyanalytic space of order $m$. Then, the eigenspace in \eqref{HLE} may be decomposed as: 
\begin{equation*}
\mathcal{E}_{m}(\Omega _{1,R})=\mathcal{F}^{\left( m+1\right) }(\Omega
_{1,R})\ominus \mathcal{F}^{\left( m\right) }(\Omega _{1,R}),
\end{equation*}%
where $\ominus $ stands for the orthogonal difference of two sets. This fact
is a direct consequence of the factorization property proved in \cite%
{Pee-Zha1} and valid for arbitrary hyperbolic Riemanns surfaces. Accordingly, $%
\mathcal{E}_{m}(\Omega _{1,R})$ will be referred to as the $m$-th
true-polyanalytic space on the annulus $\Omega _{1,R}$ with respect to $D_{%
\overline{z}}^{\omega }.$ In appendix B, we shall prove the following
formula for the reproducing kernel of this eigenspace.

\begin{Thm}
\label{RepKer} Let $B>\frac{1}{2}$, $m=0,1,...,\left\lfloor B-\frac{1}{2}\right\rfloor$. Then, the reproducing kernel of the $m$-th eigenspace $\mathcal{%
E}_{m}(\Omega _{1,R})$ reads 
\begin{equation*}
K_{m}^{R,B}\left( z,w\right) =\frac{(2\pi)^{2B-3}(2B-2m-1)}{R^{B}(\log R)^{2B-1}\Gamma (2B-m)}\sum\limits_{l=0}^{m}\sum\limits_{k=0}^{m-l}\frac{\left(
1-2B+m\right) _{k+l}}{(m-k-l)!}\frac{V^{k}\overline{V}^{l}}{k!l!}\sigma
_{k,l}^{R,B}(z,w)
\end{equation*}%
where 
\begin{equation}
\sigma _{k,l}^{R,B}(z,w)=\sum_{j\in \mathbb{Z}}\Gamma \left( B-k+i\alpha(j,B)/2\right) \Gamma \left(B-l-i\alpha(j,B)/2\right) \left( \frac{z\overline{w}}{R}\right) ^{j}  \label{Not1}
\end{equation}%
and 
\begin{equation}\label{V}
V=\frac{1}{4}\left( 1+i\cot \left( \frac{\pi \log (|z|)}{\log (R)}\right)
\right) \left( 1+i\cot \left( \frac{\pi \log (|w|)}{\log (R)}\right) \right) 
\end{equation}%
for every $z,w\in \Omega _{1,R}.$
\end{Thm}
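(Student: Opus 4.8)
The plan is to build $K_m^{R,B}$ as the normalized bilinear sum over the orthogonal basis \eqref{Basis} and then to carry out the bookkeeping that turns the product of two Routh--Romanovski polynomials into the announced double sum over $(k,l)$. Since $\{\phi_j\}$ is orthogonal, the reproducing kernel is
\begin{equation*}
K_m^{R,B}(z,w)=\sum_{j\in\mathbb{Z}}\frac{\phi_j(z)\overline{\phi_j(w)}}{\|\phi_j\|^2}.
\end{equation*}
As the Routh--Romanovski polynomials have real coefficients and $\cot(\pi\log|w|/\log R)$ is real, $\overline{\phi_j(w)}=\overline{w}^{\,j}\,\mathcal{R}_m^{(\alpha(j,B),1-B)}(\cot(\pi\log|w|/\log R))$. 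Inserting \eqref{SQN}, the factor $z^j\overline{w}^{\,j}$ combines with the $R^j$ of the denominator to give $(z\overline{w}/R)^j$, the reciprocal of the $\Gamma$-factor becomes $|\Gamma(B-m+\tfrac12 i\alpha(j,B))|^2$ in the numerator, and every remaining constant is $j$-independent. This reduces the statement to an identity for $|\Gamma(B-m+\tfrac12 i\alpha)|^2\,\mathcal{R}_m(\xi_z)\mathcal{R}_m(\xi_w)$, where $\xi_z=\cot(\pi\log|z|/\log R)$. As a sanity check, at $m=0$ this already collapses to \eqref{RKA}, which pins down the global multiplicative constant.

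Next I would pass from $\mathcal{R}_m$ to Jacobi polynomials by \eqref{Routh}, writing $\mathcal{R}_m^{(\alpha,1-B)}(\xi_z)=(-2i)^m m!\,P_m^{(\gamma,\overline{\gamma})}(i\xi_z)$ with the conjugate-imaginary parameter $\gamma=-B+\tfrac12 i\alpha(j,B)$. Setting $u=\tfrac12(1+i\xi_z)$ and $v=\tfrac12(1+i\xi_w)$, so that $V=uv$ and $\overline{V}=\overline{u}\,\overline{v}$ in the notation \eqref{V}, the elementary relations $u+\overline{u}=1$, $u-\overline{u}=i\xi_z$ give $\xi_z\xi_w=1-2(V+\overline{V})$ and $\xi_z+\xi_w=2i(\overline{V}-V)$. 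The product $P_m^{(\gamma,\overline{\gamma})}(i\xi_z)P_m^{(\gamma,\overline{\gamma})}(i\xi_w)$ is symmetric under $\xi_z\leftrightarrow\xi_w$, hence a genuine polynomial in $\xi_z+\xi_w$ and $\xi_z\xi_w$; the would-be cross terms $u\overline{v}$, $\overline{u}v$ cancel by this symmetry and the product becomes an honest polynomial in $V,\overline{V}$ whose coefficients are polynomials in $\alpha(j,B)$.

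The heart of the proof is the resulting bilinear identity. Using the explicit binomial expansion of each Jacobi polynomial, multiplying by $\Gamma(B-m+\tfrac12 i\alpha)\Gamma(B-m-\tfrac12 i\alpha)$, and applying the Pochhammer shifts $\Gamma(B-k+\tfrac12 i\alpha)=(B-m+\tfrac12 i\alpha)_{m-k}\,\Gamma(B-m+\tfrac12 i\alpha)$ together with Euler's reflection formula, I would show that the coefficient of $V^k\overline{V}^{\,l}$ collapses to $\frac{(1-2B+m)_{k+l}}{(m-k-l)!\,k!\,l!}\,\Gamma(B-k+\tfrac12 i\alpha)\Gamma(B-l-\tfrac12 i\alpha)$, the inner summation index being removed by a Chu--Vandermonde (Pfaff--Saalschütz) evaluation. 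Summing the outcome over $j\in\mathbb{Z}$ term by term reproduces precisely $\sigma_{k,l}^{R,B}(z,w)$ of \eqref{Not1}, and collecting the accumulated constants yields the prefactor in the statement.

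I expect the main obstacle to be this last identity: establishing for every $m$ that the cross terms cancel and that the fourfold $\Gamma$/binomial bookkeeping telescopes to the stated coefficient, presumably by an induction on $m$ built on the three-term recurrence for Jacobi polynomials, or directly by the hypergeometric summation above. A secondary technical point is to justify interchanging the finite $(k,l)$-sum with the infinite sum over $j$, which follows from the absolute convergence already guaranteed for the $m=0$ kernel on $1<|z|,|w|<R$.
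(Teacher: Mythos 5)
Your skeleton is the same as the paper's (Appendix B): write $K_m^{R,B}(z,w)=\sum_{j\in\mathbb{Z}}\phi_j(z)\overline{\phi_j(w)}/\|\phi_j\|^2$ using \eqref{SQN}, pass from $\mathcal{R}_m^{(\alpha(j,B),1-B)}$ to Jacobi polynomials with complex-conjugate imaginary parameters via \eqref{Routh}, and reorganize the weighted product $|\Gamma(B-m+\tfrac{i}{2}\alpha(j,B))|^2\,P_m(i\xi_z)P_m(i\xi_w)$ as a polynomial in $V$ and $\overline{V}$, summing over $j$ at the end. Your preliminary observations are correct: the Routh--Romanovski polynomials here have real coefficients, $V=uv$ with $u=\tfrac12(1+i\xi_z)$, $v=\tfrac12(1+i\xi_w)$, and the $j$-wise reduction to a bilinear identity in $V,\overline{V}$ is exactly what the paper does to reach its intermediate formula \eqref{B6}.

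The genuine gap is at the step you yourself flag as ``the heart of the proof.'' The claim that the coefficient of $V^k\overline{V}^{\,l}$ collapses to $\frac{(1-2B+m)_{k+l}}{(m-k-l)!\,k!\,l!}\,\Gamma\bigl(B-k+\tfrac{i}{2}\alpha\bigr)\Gamma\bigl(B-l-\tfrac{i}{2}\alpha\bigr)$ is asserted, not proven: you say you ``would show'' it by binomial expansion plus a Chu--Vandermonde or Pfaff--Saalsch\"utz evaluation, or by induction on $m$, but none of this is carried out, and it is not a routine summation. It is equivalent to Bateman's product formula for Jacobi polynomials, which is precisely the key lemma the paper invokes (\cite{Bat}, p.~392):
\begin{equation*}
P_{m}^{(\alpha ,\beta )}(x) P_{m}^{(\alpha ,\beta )}(y) =\sum_{k=0}^{m}(-1)^{m+k}\frac{(\alpha+\beta +m+1)_{k}}{m!(m-k)!}\left( \frac{x+y}{2}\right)^{k}
\frac{\Gamma(\alpha +m+1) \Gamma(\beta+m+1)}{\Gamma(\alpha +k+1) \Gamma(\beta+k+1)}\,P_{k}^{(\alpha ,\beta)}\left( \frac{1+xy}{x+y}\right).
\end{equation*}
The paper then expands $P_k^{(\alpha,\beta)}$ by Szeg\H{o}'s explicit formula, rewrites the Gamma quotients via reflection, reindexes $k\mapsto k+l$, and specializes $\alpha=-B+\mu_j$, $\beta=-B-\mu_j$, $x=iX$, $y=iY$, whereupon $(1+xy+x+y)/4=V$ and $(1+xy-x-y)/4=\overline{V}$. (It also needs the parameter-swap symmetry $P_m^{(-B-\mu_j,-B+\mu_j)}(-iY)=(-1)^mP_m^{(-B+\mu_j,-B-\mu_j)}(iY)$ to put both factors on the same parameters before any product formula applies; in your write-up this is hidden in the real-coefficients remark, which is fine but should be made explicit.) Note also that in a direct expansion the cross monomials $u^{m-s}\overline{u}^{\,s}v^{m-t}\overline{v}^{\,t}$ with $s\neq t$ do not ``cancel''; they must be recombined using $u+\overline{u}=v+\overline{v}=1$, and summing that recombination in closed form is exactly the content of Bateman's identity. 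Unless you either cite that formula or supply the full hypergeometric computation replacing it, the proof is incomplete.
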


When $m=0$, we recover the (analytic) reproducing kernel $%
K_{0}^{(R,B)}\left( z,w\right) $ in \eqref{RKA}. This is in
agreement with the computations done in (\cite{Pee-Zha}, p.263), if we
identify the parameter $\alpha $ there with $2B-2$ here and if we use Euler
reflection's formula: 
\begin{equation*}
\Gamma (z)\Gamma (1-z)=\frac{\pi }{\sin (\pi z)}.
\end{equation*}%
In particular, if $B \geq 1$ is an integer then the index change $j+B \rightarrow j$ together with formula 8 from \cite{Erd}, p.4: 
\begin{equation*}
\Gamma \left(B+ij\frac{\log R}{\pi }\right) \Gamma \left(B-ij\frac{\log R}{\pi }\right) =\frac{2\log (R)[\Gamma (B)]^{2}jR^{j}}{R^{2j}-1}\prod_{q=1}^{B-1}\left( 1+\frac{(j\log R)^{2}}{\pi ^{2}q^{2}}\right)
\end{equation*}%
yield 
\begin{equation*}
K_{0}^{(R,B)}\left( z,w\right) =\frac{(2\pi)^{2B-2}[\Gamma (B)]^{2}}{\pi \Gamma \left(2B-1\right)(z\overline{w})^B(\log R)^{2B-2}}\sum\limits_{j\in \mathbb{Z}} \frac{j}{R^{2j}-1}\prod_{q=1}^{B-1}\left( 1+\frac{(j\log R)^{2}}{\pi ^{2}q^{2}}\right)
\left(z\overline{w}\right) ^{j}. 
\end{equation*}%
As a result, the following limit holds:
\begin{equation*}
\lim_{R \rightarrow \infty} K_{0}^{(R,B)}\left( z,w\right) =\frac{2^{2B-2}}{\pi\Gamma \left(2B-1\right)}\sum\limits_{j\geq 1} \frac{j^{2B-1}}{\left(z\overline{w}\right) ^{j+B}},
\end{equation*}
which is the polyanalytic reproducing kernel of the complementary of the closed unit disc (or after the variable change $z\overline{w} \rightarrow 1/(z\overline{w})$ the one of the punctured open unit disc). 

On the other hand, if $B=1$ then:  
\begin{equation}
K_{0}^{(R,1)}(z,w)=\frac{1}{\pi z\overline{w}}\sum_{j\in \mathbb{Z}}\frac{j}{%
1-1/R^{2j}}\left( \frac{z\overline{w}}{R^{2}}\right) ^{j}.  \tag{3.4}
\end{equation}%
Since $z/R,y/R$ belong to the annulus ${\Omega }_{1/R,1}$, then we retrieve
the known formula of the analytic reproducing kernel in ${\Omega }_{1/R,1}$
endowed with its Lebesgue measure which may be expressed through the
Weierstrass elliptic function $\wp $ \cite{Ber}. Below, we shall
prove  that if $B\geq 1$ is an integer, then $%
K_{m}^{(R,B)}(z,y)$ is a linear combination of higher logarithmic
derivatives of Jacobi's fourth Theta function.

\section{Some properties of $K_{m}^{R,B}\left( z,w\right) $}

In this section, we shall assume that $B \geq 1$ is integer. Thinking of $B$ as the flux of the (closed two-form) magnetic field, this integrality assumption is indeed a quantization condition under which the poly-analytic Bergman kernel may be related to the fourth Jacobi's Theta function and satisfies a transformation rule under the inversion $\mathcal{I}_{R}: z \mapsto R/z$. 

\subsection{Poly-analytic Bergman kernel and fourth Jacobi Theta function}
Let 
\begin{equation*}
\theta _{4}\left( z,\tau \right) =1+2\sum\limits_{k=0}^{\infty }\left(
-1\right) ^{k}\tau ^{-k^{2}}\cos (2kz).
\end{equation*}
be the fourth theta function de Jacobi (\cite{Erd1}, p.355). Then, we shall prove in Appendix C the following result:
\begin{propo}
\label{Theta} For any $m=0,1,\cdots ,\left\lfloor B-1/2\right\rfloor$ the
poly-analytic reproducing kernel $K_{m}^{R,B}(z,w)$ may be written as a finite sum
of higher logarithmic derivatives of the fourth Jacobi's Theta function $%
\theta _{4}(z,R)$ associated with the rectangular lattice parameter $\tau
=i\log (R)/\pi$ and evaluated at the point $(i/2)\log (z\overline{w}/R)$.
\end{propo}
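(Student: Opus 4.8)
The plan is to reduce the statement to the single bilateral series $\sigma_{k,l}^{R,B}$ from \eqref{Not1} and to use the integrality of $B$ to turn the Gamma factors into elementary ones. Write $t=\log(R)/\pi$ and $u=z\overline{w}/R$. Performing the index shift $j+B\mapsto j$ (as in the analytic case treated after Theorem~\ref{RepKer}), the arguments in \eqref{Not1} become $B-k+itj$ and $B-l-itj$ with $B-k,B-l\in\{1,\dots,B\}$. First I would establish, by iterating $\Gamma(x+1)=x\Gamma(x)$ down to $\Gamma(\pm itj)$ and using $\Gamma(itj)\Gamma(-itj)=\pi/(tj\,\sinh(\pi tj))$, the identity
\begin{equation*}
\Gamma(B-k+itj)\Gamma(B-l-itj)=\frac{\pi}{tj\,\sinh(\pi tj)}\prod_{r=0}^{B-k-1}(r+itj)\prod_{r'=0}^{B-l-1}(r'-itj).
\end{equation*}
Because $\sinh(\pi tj)=(R^{2j}-1)/(2R^{j})$ and the two $r=0$ factors contribute $t^{2}j^{2}$, this turns $\sigma_{k,l}^{R,B}$, up to the prefactor $u^{-B}$ and explicit constants, into a Lambert-type series
\begin{equation*}
\sum_{j\in\mathbb{Z}}\frac{R^{j}\,j}{R^{2j}-1}\,P_{k,l}(j)\,u^{j},\qquad P_{k,l}(j)=\prod_{r=1}^{B-k-1}(r+itj)\prod_{r'=1}^{B-l-1}(r'-itj),
\end{equation*}
where $P_{k,l}$ is a polynomial in $j$ of degree $2B-k-l-2$.

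Next I would match such Lambert sums to logarithmic derivatives of $\theta_4$. For the nome $q=e^{i\pi\tau}=R^{-1}$ attached to $\tau=i\log(R)/\pi$, the infinite-product form of $\theta_4$ gives
\begin{equation*}
\frac{d}{dv}\log\theta_4(v,R)=4\sum_{m\ge1}\frac{R^{m}}{R^{2m}-1}\sin(2mv),
\end{equation*}
so that, writing $L_{p}(v):=\partial_v^{p}\log\theta_4(v,R)$, one gets $L_p=4\cdot 2^{p-1}\sum_{m\ge1}R^{m}m^{p-1}(R^{2m}-1)^{-1}\,\tau_p(2mv)$, with $\tau_p$ equal to $\pm\sin$ for $p$ odd and $\pm\cos$ for $p$ even. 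Setting $v=(i/2)\log(z\overline{w}/R)$, so that $u^{j}=e^{-2ijv}$, and pairing $\pm j$ via the oddness of $j\mapsto R^{j}/(R^{2j}-1)$, the parity works out and I would obtain
\begin{equation*}
\sum_{j\in\mathbb{Z}}\frac{R^{j}}{R^{2j}-1}\,j^{n}\,u^{j}=c_{n}\,L_{n+1}(v)+\varepsilon_{n}.
\end{equation*}
Here $n$ even produces a sine series of odd order $n+1$, while $n$ odd produces a cosine series of even order $n+1$; the term $\varepsilon_{n}$ is a constant arising only from $j=0$, nonzero merely for $n=1$.

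Finally I would expand $j\,P_{k,l}(j)$ into monomials $j^{n}$ with $1\le n\le 2B-k-l-1$, replace each by $c_{n}L_{n+1}(v)$, and reassemble. Since $0\le k,l$ and $k+l\le m\le B-1$, all orders stay below $2B$, so each $\sigma_{k,l}^{R,B}$ is a finite combination of $L_{2},\dots,L_{2B}$; only even orders survive when $m=0$, with $L_2$ the one classically tied to $\wp$, matching the remark after Theorem~\ref{RepKer}. Carrying the double sum of Theorem~\ref{RepKer}, whose coefficients are explicit constants times $V^{k}\overline{V}^{l}$, through this substitution then exhibits $K_m^{R,B}(z,w)$ as a finite sum of higher logarithmic derivatives of $\theta_4(\cdot,R)$ evaluated at $(i/2)\log(z\overline{w}/R)$.

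The step I expect to be delicate is not the structural identification, which is routine once the Gamma-product and theta-expansion are in place, but the treatment of the constants $\varepsilon_{n}$: the $j=0$ contributions are not themselves logarithmic derivatives, so I must verify that after summation over $k,l$ with the weights of Theorem~\ref{RepKer} they either cancel or are absorbed into the claimed expansion (equivalently, into a shift of the normalization). Pinning down the explicit constants $c_{n}$ and confirming this cancellation is where the computation will require the most care.
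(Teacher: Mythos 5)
Your proposal follows essentially the same route as the paper's Appendix C: the same reduction to the series $\sigma_{k,l}^{R,B}$ with the shift $j\mapsto j-B$, the same conversion of the Gamma factors into a polynomial in $j$ times $jR^{j}/(R^{2j}-1)$ (you derive via the recurrence and reflection formula exactly what the paper quotes as formula 8 of Erd\'elyi, Vol.~I, p.~4), and the same $\pm j$ parity pairing that identifies the resulting Lambert-type series with the derivatives $\partial_v^{p}\log\theta_4$ evaluated at $v=(i/2)\log(z\overline{w}/R)$. The one point where you go beyond the paper is your explicit tracking of the $j=0$ contributions $\varepsilon_n$: the paper's proof silently drops this constant term when passing from the bilateral sum to the series \eqref{C1} and \eqref{C2}, so your flagged ``delicate step'' is a genuine (if minor) refinement --- that term does not cancel in general and must simply be carried along as an additive constant in the final expression.
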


\subsection{Invariance under the automorphism group}
The automorphism group of the annulus ${\Omega }_{1,R}$ the direct product
of the rotation group and of the cyclic group generated by the inversion $\mathcal{I}_{R}$. It is much
more smaller than the Mobius group and does not act transitively on ${\Omega 
}_{1,R}$. While the kernel $K_{m}^{(R,B)}\left( z,w\right) $ is obviously
seen to be rotation-invariant (see \eqref{B3} in Appendix B below) , its transformation rule under inversion is not clear
unless $B\geq 1$ is an integer. Indeed, the index change $j\mapsto j-B$ in \eqref{B3}
shows in this case that 
\begin{equation*}
K_{m}^{(R,B)}\left( \frac{R}{z},\frac{R}{w}\right) =\left( \frac{z\overline{w%
}}{R}\right) ^{2B}K_{m}^{(R,B)}(z,w),
\end{equation*}%
or equivalently 
\begin{equation*}
K_{m}^{(R,B)}(z,w)=\left[ (\mathcal{I}_{R})^{\prime }(z)(\mathcal{I}%
_{R})^{\prime }\left( w\right) \right] ^{B}K_{m}^{(R,B)}\left( \mathcal{I}%
_{R}\left( z\right) ,\mathcal{I}_{R}\left( w\right) \right) .
\end{equation*}%
Written in this form, this transformation rule reminds the one satisfied by
the reproducing kernel of the hyperbolic disc under the action of the M\"{o}%
bius group (see e.g. \cite{hayouss}).\medskip

For general values of $B>1/2$, it is readily seen from the expression \eqref{B5} below that the inversion $\mathcal{I}_{R}$ has the effect to transform $\alpha (j,B)$ to $\alpha (j,-B)$ which is meaningless since $B$ is positive. 

\section{Concluding remarks}

In this paper, we derived the reproducing kernels corresponding to
eigenspaces of the magnetic Laplacian on the annulus. We
also expressed them by means of the fourth Jacobi theta function when the
magnetic flux is an integer and investigated their transformation
rule under the automorphism group of the annulus. At the probabilistic side,
reproducing kernels of Hilbert spaces provide very interesting examples of
determinantal point processes (DPP). For instance, for $l=1,2,...,$ the
kernel $(1-\zeta \overline{\zeta ^{\prime }})^{l+1},\zeta ,\zeta ^{\prime }\in 
\mathbb{D},$ governs the determinantal correlation functions of the zeros of entire
series whose coefficients are independent $l\times l$ Ginibre matrices \cite%
{Per-Vir,Kri}. This example was generalized in \cite{Dem-Laz} using the
magnetic Laplacian in the hyperbolic disc, however no connection to random
matrices was found yet in general. The flat counterpart of this DPP is
related to the Fock space and was introduced and studied in \cite{Shi},
extending the celebrated Ginibre point process. Remarkably, the reproducing kernel
of the hyperbolic DPP studied in \cite{Dem-Laz} is polyanalytic with respect
to both the Wirtinger operator $\partial _{\overline{z}}$ (\cite{hayouss})
and its weighted counterpart (that is the invariant Cauchy-Riemann operator, 
\cite{Pee-Zha1}). Though this property is obvious in the flat setting since
both operators coincide, it remains intriguing for the hyperbolic Disc
geometry since powers of both operators are clearly different (see \cite%
{Eng-Zha} for a comparison of these operators). By direct calculations, one checks that the poly-analyticity property with respect
to $\partial _{\overline{z}}$ fails for the basis elements of $\mathcal{E}_{m}\left( \Omega _{1,R}\right) ,m\neq 0$ while it still holds true for the
invariant Cauchy-Riemann operator. Finally, we would like to point out the
recent preprint \cite{Kat-Shi} where the authors relate zeroes of Laurent
series with Gaussian coefficients to a hyper-determinantal point process
governed by the Szeg\"o kernel of the annulus (see also \cite{Kat} for other
connections of DPP to elliptic functions).

\begin{center}
\textbf{Appendix A}
\end{center}

\begin{proof}[Proof of \eqref{SQN}]
Recall the squared $L^{2}$-norm with respect to the weight $\left(\omega _{R}\left( z\right) \right) ^{2B-2}$: 
\begin{equation*}
\left\Vert \phi _{j}\right\Vert ^{2}=\int\limits_{\Omega _{1,R}}\phi
_{j}\left( z\right) \overline{\phi _{j}\left( z\right) }\left( \omega
_{R}\left( z\right) \right) ^{2B-2}d\mu \left( z\right)
\end{equation*}%
where $\phi_j$ is given by \eqref{Basis}. Using elementary variable changes, this integral is expressed as the $L^{2}$-norm of the $m$-th Routh-Romanovski polynomial with respect to the
orthogonality weight \eqref{Student} as%
\begin{equation}  \label{A2}
\left\Vert \phi _{j}\right\Vert ^{2}=2\frac{(\log R)^{2B-1}}{\pi^{2B-2}}\int\limits_{\mathbb{-\infty }%
}^{+\infty }\mathcal{R}_{m}^{\left( \alpha \left( j,B\right) ,1-B\right)
}\left( \xi \right) \varrho _{j}^{R,B}\left( \xi \right) d\xi.
\end{equation}%
Now, the Routh-Romanovski polynomials admits the Rodrigues representation (\cite{Pe}, p.259): 
\begin{equation*}
\mathcal{R}_{m}^{\left( \alpha ,\beta \right) }\left(\xi\right) =\frac{1}{\omega ^{\left( \alpha ,\beta \right) }\left( \xi\right) }\frac{d^{m}}{d\xi^{m}}%
\left( \omega ^{\left( \alpha ,\beta \right) }\left(\xi\right) \left(1+\xi^{2}\right) ^{k}\right) ,\text{ \ }
\end{equation*}%
with%
\begin{equation*}
\omega ^{\left( \alpha ,\beta \right) }\left( \xi\right) := e^{-\alpha\cot^{-1}\xi}\left( 1+\xi^{2}\right) ^{\beta -1}.
\end{equation*}%
Substitute the polynomial $\mathcal{R}_{m}^{\left(\alpha \left(j,B\right) ,1-B\right) }\left( \xi \right)$ in \eqref{A2} by its Rodrigues representation and performing $m$ integration by parts, one gets: 
\begin{equation}  \label{A5}
\left\Vert \phi _{j}\right\Vert ^{2}=\left( -1\right) ^{m}m!a_{m}^{\left(
\alpha \left( j,B\right) ,1-B\right) }\frac{(\log R)^{2B-1}}{\pi^{2B-2}} \int\limits_{\mathbb{-\infty }%
}^{+\infty }\varrho _j^{R,B}\left( \xi \right) \left( 1+\xi ^{2}\right)
^{m}d\xi
\end{equation}%
where $a_{m}^{\left( \alpha \left( j,B\right) ,1-B\right) }$ is the leading
coefficient of $\mathcal{R}_{m}^{\left( \alpha \left( j,B\right) ,1-B\right)
}\left( \xi \right) ,$ which depends only on $B$ and $m.$ Indeed, the Routh-Romanovski and the Jacobi polynomials with complex-conjugate imaginary parameters are
interrelated via (\cite{Martinez}, p.2): 
\begin{align*}
\mathcal{R}_{m}^{(\alpha(j,B) ,1-B)}(x) &=(-2i)^{m}m!P_{m}^{(-B+i\alpha(j,B)/2,-B-i\alpha(j,B)/2)}(ix)
\\& =(-2i)^{m}m!\sum_{l=0}^{m}\frac{(m-2B+1)_{l}(-B+i\alpha(j,B)/2+l+1)_{m-l}}{l!(m-l)!}\frac{(ix-1)^{l}}{2^{l}},
\end{align*}%
whence we readily get the following expression for the leading coefficient: 
\begin{equation}\label{LC}
a_{m}^{\left( \alpha(j,B),1-B\right) }=(-1)^{m}\frac{\Gamma (2B-m)}{%
\Gamma(2B-2m)}.
\end{equation}%
Now, the variable change $\xi =\cot \theta $ yields:
\begin{align}\label{A6}
\int\limits_{\mathbb{-\infty }}^{+\infty }\left( 1+\xi ^{2}\right)^{m}\varrho _{j}^{R,B}\left( \xi \right) d\xi & =\int\limits_{0}^{\pi}e^{\alpha(j,B)\theta }\left( \sin \theta \right) ^{^{2\left( B-m\right)-2}}d\theta \nonumber
\\& =\frac{\pi e^{\pi\alpha(j,B)/2}\Gamma \left( 2\left( B-m\right) -1\right) }{2^{2\left( B-m\right) -2}\Gamma \left( B-m+i\alpha _{j}/2\right) \Gamma\left( B-m-i\alpha _{j}/2\right) }
\end{align}%
where we applied the Cauchy Beta integral (\cite{PBM1} ,\text{ p.445}, \cite%
{MMH}): 
\begin{equation*}
\int\limits_{0}^{\pi }e^{-p\,x}(\sin x)^{\nu }\,dx=\frac{2^{-\nu }\pi
e^{-\pi p/2}}{(\nu +1)\mathbf{B}\left( \frac{1}{2}\left( \nu +ip\right) +1,\,%
\frac{1}{2}\left( \nu -ip\right) +1\right) },\qquad \nu >-1,
\end{equation*}%
with $p=-\alpha(j,B)$ and $\nu =2\left(B-m\right) -2.$. Combining \eqref{A2}, \eqref{A5} and \eqref{A6}, we end up with: 
\begin{equation*}
\left\Vert \phi _{j}\right\Vert ^{2}= \frac{(\log R)^{2B-1}}{\pi^{2B-3}}2^{3-2(B-m)}\left(
-1\right) ^{m}m!a_{m}^{\left( \alpha \left( j,B\right) ,1-B\right) }\frac{%
e^{\pi\alpha \left( j,B\right)/2 }\Gamma \left( 2\left( B-m\right) -1\right) }{%
\left\vert \Gamma \left( B-m+i\alpha \left( j,B\right) /2\right) \right\vert^{2}},
\end{equation*}
Keeping in mind \eqref{Alpha} and \eqref{LC}, the expression \eqref{SQN} of the squared norm follows. 
\end{proof}

\begin{center}
\textbf{Appendix B}
\end{center}

\begin{proof}[Proof of Theorem \protect\ref{RepKer}]
According to \eqref{SQN}, an orthonormal basis for $\mathcal{E}_{m}(\Omega_{1,R})$ given by: 
\begin{multline}
\Phi _{j}\left( z\right) :=\left( \frac{(\log R)^{2B-1}}{\pi^{2B-3}}\frac{m!\Gamma (2B-m)}{(2(B-m)-1)}\frac{2^{3-2(B-m)}R^{j+B}}{\left\vert \Gamma \left( B-m+i\left( \frac{\log R}{\pi }%
\right) \left( j+B\right) \right) \right\vert ^{2}}\right) ^{-1/2}
\label{B1} \\
z^{j}\mathcal{R}_{m}^{\left( \alpha \left( j,B\right) ,1-B\right) }\left(
\cot \left( \frac{\pi \log \left\vert z\right\vert }{\log R}\right) \right) ,%
\text{ }z\in \Omega _{1,R},\text{ }j\in \mathbb{Z}.
\end{multline}%
\ Then, by the general theory of reproducing kernels (\cite{Saitoh}, p.119): 
\begin{equation}
K_{m}^{R,B}\left( z,w\right) =\sum\limits_{j\in \mathbb{Z}}\Phi _{j}\left(
z\right) \overline{\Phi _{j}\left( w\right) }.  \label{B2}
\end{equation}%
Inserting \eqref{B1} into \eqref{B2}, we get: 
\begin{multline}
K_{m}^{R,B}\left( z,w\right) =\frac{\pi^{2B-3}2^{2(B-m)-3}\left( 2\left( B-m\right)
-1\right) }{R^{B}m!(\log R)^{2B-1}\Gamma \left( 2B-m\right) }\sum_{j\in \mathbb{%
Z}}\left\vert \Gamma \left( B-m+i\alpha \left( j,B\right) /2\right)
\right\vert ^{2}\left( \frac{z\overline{w}}{R}\right) ^{j}  \label{B3} \\
\mathcal{R}_{m}^{\left( \alpha \left( j,B\right) ,1-B\right) }\left( \cot
\left( \frac{\pi \log \left\vert z\right\vert }{\log R}\right) \right) 
\overline{\mathcal{R}_{m}^{\left( \alpha \left( j,B\right) ,1-B\right)
}\left( \cot \left( \frac{\pi \log \left\vert w\right\vert }{\log R}\right)
\right) }. 
\end{multline}%
Using again the relation (\cite{Martinez}, p.2): 
\begin{equation*}
\mathcal{R}_{m}^{\left( a,b\right) }\left( x\right) =\left( -2i\right)
^{m}m!P_{m}^{\left( b-1+ia/2,b-1-ia/2\right) }\left( ix\right) ,\quad m\geq
0,
\end{equation*}%
then \eqref{B3} takes the form 
\begin{multline}
K_{m}^{(R,B)}(z,w)=\frac{(2\pi)^{2B-3}m!(2B-2m-1)}{R^{B}(\log R)^{2B-1}\Gamma (2B-m)}%
\sum_{j\in \mathbb{Z}}\left( \frac{z\overline{w}}{R}\right) ^{j}|\Gamma (B-m+%
\frac{i}{2}\alpha \left( j,B\right) )|^{2}  \label{B5} \\
P_{m}^{(-B+\frac{i}{2}\alpha \left( j,B\right) ,-B-\frac{i}{2}i\alpha \left(
j,B\right) )}\left( i\cot \left( \frac{\pi \log |z|}{\log R}\right) \right)
P_{m}^{(-B-\frac{i}{2}\alpha \left( j,B\right) ,-B+\frac{i}{2}\alpha \left(
j,B\right) )}\left( -i\cot \left( \frac{\pi \log |w|}{\ln R}\right) \right) .
\end{multline}%
For sake of simplicity, we introduce the following notations: 
\begin{equation*}
\mu _{j}:=\frac{i}{2}\alpha \left(j,B\right) = \frac{i}{\pi}(j+B)\log(R)   ,\quad t=\frac{z\overline{w}%
}{R},\text{ }
\end{equation*}%
and%
\begin{equation*}
X=\cot \left( \frac{\pi \log |z|}{\log R}\right) ,\text{ }Y=\cot \left( 
\frac{\pi \log |w|}{\log R}\right) .
\end{equation*}%
Consequently, \eqref{B5} reads: 
\begin{equation}\label{B6}
K_{m}^{(R,B)}(z,w)=\gamma _{m}^{R,B}\sum_{j\in \mathbb{Z}}t^{j}|\Gamma
(B-m+\mu _{j})|^{2}P_{m}^{(-B+\mu _{j},-B-\mu _{j})}\left( iX\right)
P_{m}^{(-B-\mu _{j},-B+\mu _{j})}\left( -iY\right) ,
\end{equation}%
where 
\begin{equation}
\gamma _{m}^{R,B}=\frac{(2\pi)^{2B-3}m!\left( 2B-2m-1\right) }{R^{B}(\log R)^{2B-1}\Gamma (2B-m)}.
\end{equation}%
Equivalently, the symmetry relation (\cite{AAR}, p.305) 
\begin{equation*}
P_{m}^{(-B-\mu _{j},-B+\mu _{j})}\left( -iY\right) =\left( -1\right)
^{m}P_{m}^{(-B+\mu _{j},-B-\mu _{j})}\left( iY\right) 
\end{equation*}%
entails 
\begin{equation*}
K_{m}^{(R,B)}(z,w)=\left( -1\right) ^{m}\gamma _{m}^{R,B}\sum_{j\in \mathbb{Z%
}}t^{j}|\Gamma (B-m+\mu _{j})|^{2}P_{m}^{(-B+\mu _{j},-B-\mu _{j})}\left(
iX\right) P_{m}^{(-B+\mu _{j},-B-\mu _{j})}\left( iY\right) .
\end{equation*}%
Now, recall Bateman's formula (\cite{Bat}\text{, p.392}): 
\begin{multline*}
P_{m}^{(\alpha ,\beta )}\left( x\right) P_{m}^{(\alpha ,\beta )}\left(
y\right) =\sum\limits_{k=0}^{m}\left( -1\right) ^{m+k}\frac{\left( \alpha
+\beta +m+1\right) _{k}}{m!\left(m-k\right) !}\left( \frac{x+y}{2}\right)
^{k} \\
\times \frac{\Gamma \left( \alpha +m+1\right) \Gamma \left( \beta
+m+1\right) }{\Gamma \left( \alpha +k+1\right) \Gamma \left( \beta
+k+1\right) }P_{k}^{\left( \alpha ,\beta \right) }\left( \frac{1+xy}{x+y}%
\right) 
\end{multline*}%
as well as the expression of the Jacobi polynomial (\cite{Szego1939}, p.67) 
\begin{equation*}
P_{k}^{(\alpha ,\beta )}\left( x\right) = \sum\limits_{l=0}^{k}\binom{k+\alpha}{k-l}\binom{k+\beta}{l} \left( \frac{x-1}{2}\right) ^{l}\left( \frac{x+1}{2}\right) ^{k-l} 
\end{equation*}%
to write 
\begin{eqnarray}
P_{m}^{(\alpha ,\beta )}\left( x\right) P_{m}^{(\alpha ,\beta )}\left(
y\right)  &=&(-1)^{m}\frac{\Gamma (\alpha +m+1)\Gamma (\beta +m+1)}{n!}%
\sum\limits_{k=0}^{m}\frac{(-1)^{k}(\alpha +\beta +m+1)_{k}}{4^{k}(m-k)!} \\
&&\times \sum\limits_{l=0}^{k}\frac{(1+xy-x-y)^{l}(1+xy+x+y)^{k-l}}{%
l!(k-l)!\Gamma (\alpha +l+1)\Gamma (\beta +k-l+1)}.
\end{eqnarray}%
Next, writing 
\begin{equation}
\Gamma (-\alpha-n)=(-1)^{n}\frac{\Gamma (-\alpha)\Gamma (1+\alpha)}{\Gamma (\alpha+n+1)},\quad n \in \mathbb{Z}_+, \alpha + n \notin \mathbb{Z}_{-},
\end{equation}%
and similarly for $\beta$, Bateman's formula takes the following form: 
\begin{multline*}
\Gamma (-\alpha -m)\Gamma (-\beta -m)P_{m}^{(\alpha ,\beta )}\left( x\right)P_{m}^{(\alpha ,\beta )}\left( y\right)  = \frac{(-1)^{m}}{m!}%
\sum_{k=0}^{m}\frac{(\alpha +\beta +m+1)_{k}}{4^{k}(m-k)!}
\\ \sum\limits_{l=0}^{k}\frac{(1+xy-x-y)^{l}(1+xy+x+y)^{k-l}}{l!(k-l)!}%
\Gamma (-\alpha -l)\Gamma (-\beta +l-k).
\end{multline*}
Equivalently, changing the summation order and performing the index change $k \mapsto k+l$, we obtain: 
\begin{multline*}
\Gamma (-\alpha -m)\Gamma (-\beta -m)P_{m}^{(\alpha ,\beta )}\left( x\right)P_{m}^{(\alpha ,\beta )}\left( y\right)  =\frac{(-1)^{m}}{m!}%
\sum_{l=0}^{m}\sum\limits_{k=0}^{m-l}\frac{(\alpha +\beta +m+1)_{k+l}}{4^{k+l}(m-k-l)!} \\
\\ \frac{[(1-x)(1-y)]^{l}[(1+x)(1+y)]^{k}}{l!k!}\Gamma (-\alpha-l)\Gamma (-\beta -k).
\end{multline*}%
Specializing this form of Bateman's formula to $\alpha =-B+\mu _{j}$, $\beta =-B-\mu _{j}$, $x=iX,$ and $\ y=iY,$ we get:  
\begin{multline*}
|\Gamma (B-m+\mu _{j})|^{2}P_{m}^{(-B+\mu _{j},-B-\mu _{j})}\left(iX\right) P_{m}^{(-B+\mu _{j},-B-\mu _{j})}\left( iY\right)  
=\frac{(-1)^{m}}{m!}\sum_{l=0}^{m}\sum_{k=0}^{m-l}\frac{\left(1-2B+m\right) _{k+l}}{(m-k-l)!} \\ \frac{\overline{V}^l V^{k}}{k!l!}\Gamma \left( B-l-\mu_{j}\right) \Gamma \left( B-k+\mu _{j}\right).
\end{multline*}
Keeping in mind \eqref{B6}, Theorem \ref{RepKer} is proved.
\end{proof}

\begin{center}
\textbf{Appendix C}
\end{center}

\begin{proof}[Proof of Proposition \protect\ref{Theta}]
Recall $\alpha(j,B) = 2(j+B)\log(R)/\pi$ and consider the series \eqref{Not1}: 
\begin{equation*}
\sigma _{k,l}^{R,B}(z,w)=\sum_{j\in \mathbb{Z}}\Gamma \left( B-k+i\frac{\alpha(j,B)}{2}\right) \Gamma \left(B-l-i\frac{\alpha(j,B)}{2}\right) \left( \frac{z\overline{w}}{R}\right) ^{j}, 
\end{equation*}%
where without loss of generality, we assume $0 \leq k \leq l \leq m$. Now, perform there the index change $j\mapsto j-B$ to write it as
\begin{align*}
\sigma _{k,l}^{R,B}(z,w) & = \left( \frac{R}{z\overline{w}}\right)^{B}\sum_{j\in \mathbb{Z}}\Gamma \left( B-k+i j \frac{\log (R)}{\pi}\right) \Gamma \left(B-l-i j \frac{\log (R)}{\pi }\right) \left( \frac{z\overline{w}}{R}\right) ^{j} 
\\& = \left( \frac{R}{z\overline{w}}\right)^{B}\sum_{j\in \mathbb{Z}}\prod_{s=1}^{l-k}\left( B-(k+s)+i j \frac{\log (R)}{\pi}\right) \\& \Gamma \left( B-l+i j \frac{\log (R)}{\pi}\right) \Gamma \left(B-l-i j \frac{\log (R)}{\pi }\right) \left( \frac{z\overline{w}}{R}\right) ^{j},
\end{align*}%
where an empty product equals one. Next, we shall appeal to formula 8 from \cite{Erd}, p.4: 
\begin{equation*}
\Gamma \left( B-l+ij\frac{\log R}{\pi }\right) \Gamma \left( B-l-ij\frac{\log R}{\pi }\right) =\frac{2\log (R)[\Gamma (B-l)]^{2}jR^{j}}{R^{2j}-1}\prod_{q=1}^{B-l-1}\left( 1+\frac{(j\log R)^{2}}{\pi ^{2}q^{2}}\right)
\end{equation*}%
to get further 
\begin{multline*}
\sigma _{k,l}^{R,B}(z,w)  = 2\log (R)[\Gamma (B-l)]^{2} \left( \frac{R}{z\overline{w}}\right)^{B}
\sum_{j\in \mathbb{Z}}\prod_{s=1}^{l-k}\left( B-(k+s)+i j \frac{\log (R)}{\pi}\right)
 \\ \prod_{q=1}^{B-l-1}\left( 1+\frac{(j\log R)^{2}}{\pi ^{2}q^{2}}\right)\frac{jR^{j}}{R^{2j}-1}\left( \frac{z\overline{w}}{R}\right) ^{j}.
\end{multline*}%
Now, note that 
\begin{equation*}
\prod_{q=1}^{B-l-1}\left( 1+\frac{(j\log R)^{2}}{\pi ^{2}q^{2}}\right)\frac{jR^{j}}{R^{2j}-1}
\end{equation*}
is invariant under the flip $j\mapsto -j$, while 
\begin{equation*}
 \prod_{s=1}^{l-k}\left( B-(k+s)+i j \frac{\log (R)}{\pi}\right),
\end{equation*}
is a complex polynomial in $j$. The odd part of the latter leads to series of the form: 
\begin{equation}\label{C1}
\sum_{j\geq 1} j^{2s}\prod_{q=1}^{B-l-1}\left( 1+\frac{(j\log R)^{2}}{\pi ^{2}q^{2}}\right) \frac{R^{j}}{R^{2j}-1}\left[ \left( \frac{z\overline{w}}{R}\right) ^{j} - \left( \frac{R}{z\overline{w}}\right) ^{j}\right], \quad s \geq 1
\end{equation}%
and its even part to 
\begin{equation}\label{C2}
\sum_{j\geq 1} j^{2s}\prod_{q=1}^{B-l-1}\left( 1+\frac{(j\log R)^{2}}{\pi ^{2}q^{2}}\right) \frac{j R^{j}}{R^{2j}-1}\left[ \left( \frac{z\overline{w}}{R}\right) ^{j} + \left( \frac{R}{z\overline{w}}\right) ^{j}\right], \quad s \geq 0.
\end{equation}%
But
\begin{equation*}
4i\sum_{j\geq 1}\frac{R^{j}}{R^{2j}-1}\sinh \left( j\log \left( \frac{z%
\overline{w}}{R}\right) \right)
\end{equation*}%
is the logarithmic derivative of the theta function $\theta _{4}$ evaluated
at $(i/2)\ln (z\overline{w}/R)$ (\cite{Erd1}, p.358), and in turn 
\begin{equation*}
4i\sum_{j\geq 1}\frac{jR^{j}}{R^{2j}-1}\cosh \left( j\log \left( \frac{z%
\overline{w}}{R}\right) \right)
\end{equation*}%
is its second logarithmic derivative. Since 
\begin{equation*}
j^{2s}\prod_{q=1}^{B-l-1}\left( 1+\frac{(j\log R)^{2}}{\pi ^{2}q^{2}}\right), \quad s \geq 0,
\end{equation*}
are even polynomials in $j$, then the series \eqref{C1} and \eqref{C2} are higher logarithmic derivatives of $\theta_4$ as well evaluated at $(i/2)\ln (z\overline{w}/R)$. Keeping in mind the expression of the poly-analytic Bergman kernel proved in Theorem \ref{RepKer}, proposition \ref{Theta} is proved.
\end{proof}

\end{document}